\newtheorem{prop}{Proposition}[section]
\newtheorem{lem}{Lemma}[section]
\newcommand*\Let[2]{\State #1 $\gets$ #2}
\def\sds{\strut \displaystyle}
\title{On the Inversion Modulo a Power of an Integer}
\date{}
\author
{Guangwu Xu\thanks{SCST, Shandong University, China,  e-mail: {\tt gxu4sdq@sdu.edu.cn}(Corresponding author)},
	Yunxiao Tian\thanks{SCST, Shandong University, China,  e-mail: {\tt tyunxiao@mail.sdu.edu.cn}.}, Bingxin Yang\thanks{SCST, Shandong University, China, e-mail: {\tt bingxiny@mail.sdu.edu.cn}.}.\\
}
\begin{document}


\baselineskip18pt


\maketitle

\begin{abstract}
Recently, Ko\c{c} proposed a neat and efficient algorithm for computing 
\[
x = a^{-1} \pmod {p^k}
\]
for a prime $p$ based on the
exact solution of linear equations using $p$-adic expansions. The algorithm 
requires only addition and right shift per step.
In the first part of this paper, we design an algorithm that computes 
\[
x = a^{-1} \pmod {n^k}
\]
for any integers $a, n>1$ with $\gcd(a, n)=1$. The algorithm has a motivation from the schoolbook multiplication and achieves both efficiency and generality.
The greater flexibility of our algorithm is explored by  utilizing the built-in arithmetic of computer architecture,
e.g., $n=2^{64}$, and experimental results show significant improvements. 
This paper also contains some results on modular inverse based on an alternative proof of correctness of Ko\c{c} algorithm.  
For the computation of modular inverses when the modulus is a special power of a prime
$p$ (i.e., of the form $p^{2^s}$),
an efficient algorithm was developed by Dumas and later improved by Hurchalla. These methods are
based on Hensel lifting and perform particularly well when $p=2$ and $2^s$ 
matches the native bit width of a computer. In the second part of the paper, we present a generalization of these methods to moduli of the form $n^{2^s}$ for any integer $n>1$.
The derivation of our algorithm follows from a simple algebraic manipulation.

{\bf Key words:} Multiplicative inverse, Modular arithmetic, Schoolbook multiplication.

\end{abstract}

\section{Introduction}
Arithmetic serves as a fundamental building block in modern computation, supporting a variety of 
processes ranging from basic data manipulation to complex algorithmic operations. 
The efficiency of arithmetic operations, therefore, directly influences the performance and scalability of computational systems, making it a critical factor in designing high-speed and resource-efficient 
technologies.
The impact of arithmetic efficiency is particularly pronounced in specialized fields such as cryptography and communication. In cryptography, secure protocols like encryption, decryption, and digital signatures rely heavily on intensive arithmetic computations, especially those involving large integers and finite fields. 
For instance, public-key cryptosystems such as RSA and elliptic curve cryptography depend on rapid arithmetic to ensure both security and real-time responsiveness, with modular arithmetic being a major bottleneck. Recent exciting developments in post-quantum cryptographic systems and homomorphic encryption schemes also require a large number of modular operations and fast number-theoretic transforms.
The modular multiplication has emerged as a primary concern when evaluating computational costs in modern cryptographic practice.  Consequently, 
optimizing this operation through advanced algorithms or hardware acceleration is a key focus in reducing computational overhead, minimizing power usage, and enhancing the feasibility of secure communications. 

A specific modular operation $a \pmod n$ is to perform the Euclidean division
\[
a = qn +r,   \quad\quad 0\le r < n,
\]
to obtain $r$. But usual division is expensive compared to multiplication. Although there were methods to replace division with multiplication to speed up calculations even in ancient times, they were only applicable to special divisors, such as $n=2^s5^t$. In 1274, Hui Yang introduced such a method in his book \cite{Yanghui274}\footnote{There was a brief description about this book in \cite{Libbrect}.  }: without loss of generality, we assume that $s>t$, then
\[
\frac{a}n = \frac{a\times (5^{s-t})}{10^s}.
\]
Note that in decimal system, division by a power of $10$ yields the quotient and remainder simply by shifting the digits, so the expensive division is essentially replaced by the multiplication $a\times (5^{s-t})$.
However, this strategy can not be generalized to an arbitrary divisor.

The emergence of public-key cryptography introduced algorithms that rely heavily on modular arithmetic. In 1985, the famous Montgomery reduction algorithm  \cite{Mont} was proposed. This algorithm makes a very clever use of the idea of replacing division with multiplication, leveraging the fact that integer division (and the modulo operation) by a power of two is computationally straightforward. Given an odd modulus $N$ and $R=2^k>N$,
the algorithm works with  Montgomery representation for integers: $a$ is represented as $aR \pmod N$. Let $R^{-1} $ denote the modular inverse of $R$ modulo $N$ (i.e., $RR^{-1}\equiv 1 \pmod N$ ) and 
$N^{-1} $ denote the modular inverse of $N$ modulo $R$ (i.e., $NN^{-1}\equiv 1 \pmod R$ ), where
$0<R^{-1}<N$ and $0<N^{-1}<R$. First, the value $N^{-1}$ needs to be pre-computed. Then for a given input $0<T<NR$, the algorithm computes  $t$ as follows
\[
t=\frac{T+\big(\big((T\bmod R)(R-N^{-1})\big)\bmod R\big)N}{R},
\]
and finally returns
\[
TR^{-1}\bmod N = \left\{ 
\begin{array}{ll} t-N & \mbox{ if } t\ge N \\
	t& \mbox{ otherwise }.
\end{array}
\right.
\] 
For a technical motivation of Montgomery algorithm, the literature regards it as a generalization of
Hensel's odd division for 2-adic numbers \cite{BM,SV}. Alternatively, a unified treatment of Montgomery-type modular reduction algorithms using the Chinese Remainder Theorem (CRT) is presented in \cite{XJY}. To derive the integer $t$ mentioned above, we begin with the following identity involving $R$ and $N$:
\[
R^{-1}R+N^{-1}N = 1+RN.
\]
Multiplying both sides by $T$ yields
\[
TR^{-1}R+TN^{-1}N = T+TRN.
\]
This equation can be interpreted as an equivalent form of the Chinese Remainder Theorem for the two moduli $R$ and $N$. Rearranging the terms, we have $TR^{-1}R = T + T(R - N^{-1})N$, and consequently,
\[
TR^{-1}=\frac{T+T(R-N^{-1})N}{R}\equiv t \bmod N.
\]
This explains the derivation of Montgomery algorithm.

Multiplication, division, and modular arithmetic with powers of two are trivial for modern computers. What about computing the inverse of an odd integer modulo a power of two? This question is interesting not only for theoretical curiosity, but also for its practical importance. We have seen that computing inversion of an odd integer modulo a power of $2$ is a component for Montgomery modular reduction algorithm, this computational task is often seen in other scenarios, e.g. \cite{XZDDF}.

It is expected that the calculation of an inverse modulo a power of $2$ is easier than the general case. Indeed  there have been several surprisingly simple and efficient methods designed for this purpose.  
In \cite{DK},  Duss\'e and Kaliski 
gave an efficient algorithm for computing the inverse 
\[
x=a^{-1}\pmod{2^k}  
\]
by modifying
the extended euclidean algorithm. In this method, multiplication are used in each iteration.

Arazi and Qi present a review of previously existing algorithms for computing
\[
x=a^{-1}\pmod{2^k} 
\]
and  introduce a
new algorithm in their paper \cite{AQ}. For an odd integer $a$, their new contribution is an efficient algorithm for calculating 
\[
x=a^{-1}\pmod{2^{2^s}} ,
\]
by restricting the power $k$ to a special form of $2^s$. Integer multiplications are also involved in the algorithm.

In \cite{Dum}, Dumas observes that   Arazi and Qi's algorithm is a specific case of Hensel lifting. In this case, the $p$-adic analogue of the classical Newton's method for approximating real roots of a differentiable function arises naturally \cite{BS,RMurty}. Then Newton-Raphson's iteration yields a recursion to compute
\[
x=a^{-1}\pmod{p^{2^s}} ,
\]
for any prime $p$. 

An improvement of Dumas's method is presented in \cite{Hur} where  Hurchalla
describes an algorithm for the integer multiplicative inverse modulo a particular power of $2$.
The algorithm is claimed to complete in the fewest cycles known for modern microprocessors,
when using the native bit width $w$ for the modulus $2^w$.

Recently in  \cite{Koc},  Ko\c{c} proposed a neat and efficient algorithm for computing 
\[
x = a^{-1} \pmod {p^k}
\]
for any prime number $p$ and any integer $k>0$. This algorithm is based on 
Dixon's method \cite{Dixon}  for computing the exact rational solution
to a regular system of linear equations
\[
\mathbf{Ax}=\mathbf{b}
\] 
with integer coefficients. The essence of the method is to compute  integer vector $\mathbf{\tilde{x}}$
such that
\[
\mathbf{A\tilde{x}}\equiv \mathbf{b} \pmod{p^k}
\] 
for a suitably large integer $k$, then deducing the
rational solution $\mathbf{x}$ from the $p$-adic approximation $\mathbf{\tilde{x}}$.
Nice features of Ko\c{c} algorithm include that only addition (subtraction)\footnote{The algorithm does require multiplications of a digit (a non-negative integer that is smaller than $p$ ) with a number.} and right shifting (dividing by $p$) are used in each step. 
The binary version of Ko\c{c}  algorithm is significantly more efficient than the previous existing algorithms
for computing $a^{-1} \pmod {2^k}$ when k is small.
We would also like to mention that the more recent algorithm of 
Hurchalla \cite{Hur} gives a better performance when within the native word size of an architecture 
(e.g., the modulus is $2^{64}$).

One of the main purposes of this paper is to design an efficient and flexible algorithm  that 
computes 
\begin{equation}\label{eq:general}
	x = a^{-1} \pmod {n^k}
\end{equation}
for any integers $n>1$ and $k>0$. 

The motivation of our algorithm is the schoolbook multiplication. The formulation of carries in 
computation makes the use of right shifting (dividing by $n$) appear natural. The algorithm is flexible in that $n>1$ can be arbitrary, 
as long as the number $a$ to be inverted satisfies $\gcd(a,n)=1$. 
A natural application is to set $n=2^{64}$, so one can compute modulo inverse of an odd integer of the form
\[
a = a_0+a_12^{64}+a_2(2^{64})^2+\cdots+a_{k-1}(2^{64})^{k-1}, 
\]
where $0\le a_i<2^{64}$, to take full advantage of the built-in arithmetic
of a $64$-bit computer. Our 
experiments show that the resulting speed improvement over existing methods is very significant.

It is interesting to note that on a $64$-bit computer, the choice of $n=2^{128}$ is even superior to that of  
$n=2^{64}$, for most cases. Thus if the modulus is $(2^{128})^k$, one can compute modulo inverse of an 
odd integer of the form
\[
c = c_0+c_12^{128}+c_2(2^{128})^2+\cdots+c_{k-1}(2^{128})^{k-1}, 
\]
with $0\le c_i<2^{128}$, for a greater efficiency. 

There are several commonly used positional numeral systems whose radix, $n$, is not a power of a prime number (e.g., the decimal, dozenal, and sexagesimal systems). Therefore, the proposed algorithm is expected to perform better in computing (\ref{eq:general}), in line with the digital arithmetic of the system.

This paper also discusses and analyzes Hensel lifting methods of Dumas and  Hurchalla. As mentioned before, these methods are limited to moduli of the form  $p^{2^s}$ , where $p$ is prime. In this work, we generalize Hurchalla's algorithm to compute
\[
a^{-1} \pmod{n^{2^s}}
\]
for any integers $a, n>1$  with $\gcd(a, n)=1$. The derivation of our algorithm follows from a simple algebraic manipulation.

The contents of the rest of this paper are arranged as follows:
Section 2 describes Ko\c{c}'s algorithm for inversion modulo $p^k$, 
providing an alternative proof of its correctness and further discussion to highlight several useful properties of the method. 
In Section 3, the method for computing inverses modulo a general power is presented, along with its elementary motivation and formulation, as well as a discussion of its performance.
Section 4 discusses the existing methods based on Hensel lifting with a generalization.
Section 5 summarizes the paper .

\section{Ko\c{c}'s Algorithm for Inversion Mod $p^k$}\label{sec:2}
In \cite{Koc}, Ko\c{c} proposed a method for computing inverse with a modulus being a prime power.  This method is based on a procedure of $p$-adic approximation of 
the exact rational solution of a non-degenerate system of linear equations with integral coefficients due to Dixon \cite{Dixon}.

To describe Ko\c{c} algorithm, let us fix a prime $p$ and an 
integer $k>0$.  Given an integer $a>0$ such that $p\not| a$, then $x=a^{-1}\pmod {p^k}$ exists. Since $0< x <p^k$, it can be uniquely written as a $p$-adic form
\[
x= X_0+X_1p+X_2p^2+\cdots + X_{k-1}p^{k-1}
\]
with digits $0\le X_i<p$.  Ko\c{c} algorithm executes a loop of length $k$ and produces digits $X_0, X_1, \cdots, X_{k-1}$ one by one. An initial computation of 
$c = a^{-1} \pmod{p}$ is needed. Note that the number $c<p$ is of the size as a digit. 

\begin{algorithm}
	\caption{Ko\c{c} Algorithm
		\label{alg:Koc}}
	\begin{algorithmic}[1]
		\Require{A prime $p$, and an integer $k>0$; integer $0<a<p^k$ with $\gcd(p,a)=1$}
		\Ensure{$a^{-1} \pmod{p^k}$}
		\Function{ModInverse}{$a, p^k$}
		\Let{$c$}{$a^{-1}\pmod p$}
		\Let{$b_0$}{$1$}
		\For{$i = 0$ to $k-1$}
		\Let{$X_i$}{$cb_i \pmod p$}
		\Let{$b_{i+1}$}{$\frac{b_i-a  X_i}p $}
		\EndFor
		\State \Return{$x=(X_{k-1}\cdots X_1X_0)_p$}
		\EndFunction
	\end{algorithmic}
\end{algorithm}

In each step of the algorithm, essentially one subtraction is used. There are also two multiplications, $b_i c$ and $a X_i$, but the multipliers $c$ and $X_i$ are numbers less than $p$; namely, they are digits in the $p$-adic representation. The right shift $\frac{b_i - a X_i}{p}$ turns the operation of obtaining $X_i$ into finding the least significant digit of $c b_i$.

We shall provide a proof of the correctness of  algorithm  \ref{alg:Koc} 
which is  entirely different from that in \cite{Koc}. This proof will serve the purpose of
revealing some interesting properties of the computation. 
\begin{prop} Algorithm  \ref{alg:Koc} is correct.
\end{prop}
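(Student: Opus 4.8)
The plan is to exhibit a loop invariant that ties the partial $p$-adic sum produced so far to the running quantity $b_i$, and then to read off correctness at the final iteration. Concretely, I would prove by induction on $i$ that, after the first $i$ passes through the loop, the integer identity
\[
a\sum_{j=0}^{i-1} X_j p^j + b_i\, p^i = 1
\]
holds for every $i = 0, 1, \dots, k$. The base case $i=0$ is immediate: the empty sum vanishes and $b_0 = 1$, so the left-hand side is $b_0 p^0 = 1$.

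For the inductive step, assume the identity at index $i$. First I would confirm that the assignment $b_{i+1} = (b_i - aX_i)/p$ genuinely yields an integer. Since $c \equiv a^{-1} \pmod p$ we have $ac \equiv 1 \pmod p$, and $X_i \equiv c b_i \pmod p$ by construction, so $aX_i \equiv a c b_i \equiv b_i \pmod p$, whence $p \mid (b_i - a X_i)$. This divisibility gives the relation $b_i = a X_i + p\, b_{i+1}$. Substituting it for $b_i$ in the invariant and collecting the $p^i$ terms rewrites $a\sum_{j=0}^{i-1}X_j p^j + (aX_i + p\,b_{i+1})p^i$ as $a\sum_{j=0}^{i}X_j p^j + b_{i+1}\,p^{i+1}$, which still equals $1$; this is precisely the invariant at index $i+1$.

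Taking $i = k$ then yields $a\sum_{j=0}^{k-1}X_j p^j + b_k\, p^k = 1$, so with $x = \sum_{j=0}^{k-1} X_j p^j$ we obtain $ax \equiv 1 \pmod{p^k}$. It remains to observe that each $X_i$ is a residue modulo $p$, hence $0 \le X_i < p$, so $x$ is exactly the canonical $p$-adic representative in $[0,p^k)$; since $a\cdot 0 \not\equiv 1$, we in fact have $0 < x < p^k$, and $x$ is the unique inverse the algorithm is required to return.

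The computation is short, and the only genuine technical point is the integrality of $b_{i+1}$, which is where the defining choice $X_i \equiv c b_i$ together with $c \equiv a^{-1}\pmod p$ is used in an essential way. I do not expect a real obstacle beyond this; the substantive step is instead \emph{identifying} the invariant, since once the form $a\,(\text{partial sum}) + b_i p^i = 1$ is written down the induction is forced and the remaining manipulations are routine algebra.
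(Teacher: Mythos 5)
Your proof is correct and is essentially the same argument as the paper's: the paper multiplies the relation $pb_{i+1}=b_i-aX_i$ by $p^i$ and telescopes the resulting chain to get $p^kb_k=1-ax$, which is exactly the closed form of your inductive invariant $a\sum_{j=0}^{i-1}X_jp^j+b_ip^i=1$ (the paper even records the partial sums afterward as its equation (\ref{eq:ainvmodps})). Your explicit check that $p\mid(b_i-aX_i)$, via $aX_i\equiv acb_i\equiv b_i\pmod p$, is a point the paper passes over more quickly, but it does not change the route.
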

\begin{proof}
	We first note that the least significant digit of $ax$ is $1$ and digits of indices $1$ to $k-1$ are all zero. So the line $6$ becomes $pb_{i+1} = b_i-aX_i$. Therefore we have that the 
	set of relations
	\[	
	\left\{\begin{array}{lll}b_0  &=& 1\\
		pb_1 &=& b_0-aX_0\\
		pb_2  &=& b_1-aX_1\\
		pb_3  &=& b_2-aX_2\\
		& &\cdots \\
		pb_{k-1}  &=& b_{k-2}-aX_{k-2}\\
		pb_k  &=& b_{k-1}-aX_{k-1}
	\end{array}
	\right.
	\]
	implies
	\begin{equation}\label{eq:impl}
		\left\{	 \begin{array}{lll}b_0  &=& 1\\
			pb_1 &=& b_0-aX_0\\
			p^2b_2  &=& pb_1-aX_1p\\
			p^3b_3  &=& p^2b_2-aX_2p^2\\
			& &\cdots \\
			p^{k-1}b_{k-1}  &=& p^{k-2}b_{k-2}-aX_{k-2}p^{k-2}\\
			p^kb_k  &=& p^{k-1}b_{k-1}-aX_{k-1}p^{k-1}
		\end{array}
		\right.
	\end{equation}
	by multiplying a suitable power of $p$ to the both sides of each relation in the set.

	Summing up the equalities on the lower portion, one gets 
	\begin{eqnarray*}
		p^k b_k &=& 1-aX_0-aX_1p-\cdots \\
		&& -aX_{k-2}p^{k-2}-aX_{k-1}p^{k-1}\\
		&=&1-ax,
	\end{eqnarray*}
	namely, 
	\[
	ax = 1+(-b_k)p^k.
	\]
\end{proof}
It is remarked that more useful information can be revealed from this proof. To be more specific,
we have the following properties:
\begin{enumerate}
	\item [] {\bf Property 1}. For each  $s=1,2,\cdots, k-1$,  
	\begin{equation}\label{eq:ainvmodps}
		(X_{s-1}\cdots X_1X_0)_p = a^{-1}\pmod {p^s}.
	\end{equation}
	This can be checked by adding the first $s$ equalities on the right of (\ref{eq:impl}) in the proof, we get
	\begin{eqnarray*}
		p^s b_s &=& 1-aX_0-aX_1p-\cdots \\
		&&-aX_{s-2}p^{s-2}-aX_{s-1}p^{s-1}\\
		&=&1-a(X_0+X_1p+\cdots +X_{s-1}p^{s-1}),
	\end{eqnarray*}
	so $(X_{s-1}\cdots X_1X_0)_p=a^{-1}\pmod {p^s}$.
	
	It is noted that the Property 1 has been
	proved in  \cite{Koc} as the suffix property.
	Ko\c{c} also illustrates this property by the example of computing $12^{-1} \pmod {5^5}$.
	\item [] {\bf Property 2}. For each  $s=1,2,\cdots, k$,
	\begin{equation}\label{eq:psinvmoda}
		(p^s)^{-1} \pmod a =a+b_s.
	\end{equation}	
	In fact, denoting $u=(X_{s-1}\cdots X_1X_0)_p$, we have 
	\[
	p^sb_s+au=1 \mbox{ and } |b_s|<a.
	\]
	Since
	\[
	(a+b_s)p^s = 1+(p^s-u)a.
	\]
	and $b_s<0$, we see that $(p^s)^{-1} \pmod a =a+b_s.$

	This property indicates that $(p^s)^{-1} \pmod a$ is already computed and stored in variable $b_i$ as a byproduct of the 	Ko\c{c} algorithm.
	
	We now use the example from \cite{Koc}, computing 
	\[
	12^{-1} \pmod {5^5},
	\]
	to illustrate Property 2.
	
\begin{table}[h!]
		\caption{Ko\c{c}'s algorithm for Computing $12^{-1} \pmod {5^5}$}\label{tb:prop2}
		\centering
		\begin{tabular}{lll}
			$i$	& $b_{i}=\frac{b_{i-1}-a  X_i}p $ & $X_i=cb_i \pmod p$ \\
			\hline
			$0$&$b_0=1$& $X_0=3$\\
			$1$&$b_1=-7$& $X_1=4$\\
			$2$&$b_2=-11$& $X_2=2$\\
			$3$&$b_3=-7$& $X_3=4$\\
			$4$&$b_4=-11$& $X_4=2$\\
			\hline
		\end{tabular}
\end{table}
From table \ref{tb:prop2}, with the setting of $a = 12, p=5, k=5$, we get $(p^s)^{-1} \pmod a$ for $s= 1, 2, 3, 4$:
	\begin{center}
		\begin{tabular}{l}
			$p^{-1} \pmod a =5^{-1}\pmod {12} =12-7=5$\\
			$(p^2)^{-1} \pmod a=25^{-1}\pmod {12} =12-11=1$\\
			$(p^3)^{-1} \pmod a=125^{-1}\pmod {12} =12-7=5$\\
			$(p^4)^{-1} \pmod a=625^{-1}\pmod {12} =12-11=1$.
		\end{tabular}
	\end{center}
\end{enumerate}

\

For the case of  $p=2$, Ko\c{c} \cite{Koc} gets the following procedure for $a^{-1}\pmod {2^k}$ 
which is significantly more efficient than the existing algorithms (before the time of \cite{Hur})
for small $k$ :
\begin{algorithm}
	\caption{Binary Version of Ko\c{c} Algorithm
		\label{alg:Koc-2}}
	\begin{algorithmic}[1]
		\Require{An integer $k>0$ and an odd integer $a<2^k$}
		\Ensure{$a^{-1} \pmod{2^k}$}
	%
		\Function{ModInverse2}{$a, 2^k$}
		\Let{$b_0$}{$1$}
		\For{$i = 0$ to $k-1$}
		\Let{$X_{i}$}{$b_{i} \pmod 2 $}
		\Let{$b_{i+1}$}{$\frac{b_{i}-aX_{i}}2$}
		\EndFor
		\State \Return{$x=(X_{k-1}\cdots X_1X_0)_2$}
		\EndFunction
	\end{algorithmic}
\end{algorithm}

\section{An Algorithm of Inversion Mod $n^k$  for General $n$}
The main objective of this section is to develop a method for computing the modular inverse where the modulus is a power of an arbitrary integer greater than  $1$.

Let integers $n>1$ and $k>0$ be fixed. For any $0 < a < n^k$ with $\gcd(a,n) = 1$, the number $a$ has a unique radix-$n$ representation
\[
a = a_0+a_1n+a_2n^2+\cdots a_{k-1}n^{k-1},
\]
where $0\le a_i<n$.

Under these conditions, the inverse $x \equiv a^{-1} \pmod{n^k}$ exists. We express this inverse in its radix-$n$ form as
\[
x = X_0+X_1n+X_2n^2+\cdots X_{k-1}n^{k-1},
\]
with $0\le X_i<n$. 

Let $c = a^{-1}\pmod n$. From the representation above, it follows that $c = X_0$, and consequently
\[
a_0X_0\equiv 1 \pmod n.
\]

We know that $ax = 1 + n^k\ell$ for some integer $\ell \ge 0$. Writing the multiplication $ax$ using the standard schoolbook method, we see a result as illustrated in Table~\ref{tb:schmul}:

\begin{table*}[t!]
	\caption{Schoolbook Multiplication Table }	\label{tb:schmul}
	\centering
	\begin{tabular}{lllllllll}
		& & & $a_{k-1}$ & $a_{k-2}$ & $\cdots$ & $a_2$ & $a_1$ & $a_0$ \\
		& & & $X_{k-1}$ & $X_{k-2}$ & $\cdots$ & $X_2$ & $X_1$ & $X_0$ \\
		\hline
		& & & $a_{k-1}X_0$ & $a_{k-2}X_0$ & $\cdots$ & $a_2X_0$ & $a_1X_0$ & $a_0X_0$ \\
		& & $a_{k-1}X_1$ & $a_{k-2}X_1$  & $a_{k-3}X_1$ & $\cdots$ & $a_1X_1$ & $a_0X_1$& \\
		& $a_{k-1}X_2$ & $a_{k-2}X_2$ & $a_{k-3}X_2$ & $a_{k-4}X_2$ & $\cdots$  & $a_0X_2$& &\\
		&&&&&$\cdots$&&&\\
		&$a_3X_{k-2}$&$a_2X_{k-2}$&$a_1X_{k-2}$&$a_0X_{k-2}$&$\cdots$&&&\\
		&$a_2X_{k-1}$&$a_1X_{k-1}$&$a_0X_{k-1}$&&$\cdots$&&&\\
		&&&&&$\cdots$&&&\\
		$\cdots$&$q_{k}$&$q_{k-1}$&$q_{k-2}$&$q_{k-3}$&$\cdots$&$q_1$&$q_0$&\\
		\hline
		&$*$& $*$ &  $0$ & $0$ &$\cdots$&$0$&$0$&$1$\\
	\end{tabular}
	
	\noindent
	here $q_0, q_1, \cdots,q_{k-2}, q_{k-1}, \cdots$ are carries.
\end{table*}

Given that the first digit (the case $i=0$) of the radix-$n$ expansion of $ax$ is $1$, while all subsequent digits for $1 \le i < k$ are $0$, it follows that the carries $q_0, q_1, \dots, q_{k-1}$ must satisfy
$	q_0n = a_0X_0-1$ and
\[
q_{i}n =q_{i-1}+a_{i}X_0+a_{i-1}X_1+\cdots+a_0X_{i},
\]
for $1\le i <k$.

The above indicates that, assuming $X_0, X_1, \cdots, X_{i-1}$ and $q_{i-1}$  
have already been computed, the subsequent values $X_i$ and $q_i$ are determined by the
relations
\begin{align*}
	X_i &= - a_0^{-1}(q_{i-1}+a_{i}X_0+a_{i-1}X_1+\cdots+a_1X_{i-1}) \pmod n\\
	q_i &=\frac{q_{i-1}+a_{i}X_0+a_{i-1}X_1+\cdots+a_0X_{i}}{n}.
\end{align*}
To obtain a recurrence relation that is more amenable, we need 
to investigate an explicit expression of $q_i$.

By simple calculation, we have
\begin{align*}
	q_0 &= \frac{a_0X_0-1}n\\
	q_1 &=  \frac{(a_0+a_1n)(X_0+X_1n) -1 -a_1X_1n^2}{n^2}\\
	q_2 &=\frac{\sds\sum_{j=0}^2a_jn^j\sum_{j=0}^2X_jn^j -1 -\sum_{j=1}^2a_jX_{3-j}n^3-a_2X_2n^4}{n^3}\\
\end{align*}
In general, by denoting
\[
a^{(i)}=\sum_{j=0}^ia_jn^j  \mbox{ and } x^{(i)}=\sum_{j=0}^iX_jn^j ,
\]
we have a precise formula for carry:
\begin{equation}\label{eq:carry}
	q_{i} =\frac{a^{(i)}x^{(i)}-1}{n^{i+1}} - \sum_{\ell=0}^{i-1}(\sum_{j=\ell+1}^iX_ja_{i+\ell+1-j})n^{\ell}
\end{equation}
for any $0\le i<k$. Note that 
\[
a\equiv a^{(i)} \pmod{n^{i+1}},
\] 
the
formula (\ref{eq:carry}) is very suggestive and we can use $T_0=-1$ and
\[
T_{i+1} = \frac{ax^{(i)}-1}{n^{i+1}}
\]
to set up a recurrence relation.

\begin{lem}\label{lem:schoolbook} Let $T_0=-1$ and $T_{i+1} = \frac{ax^{(i)}-1}{n^{i+1}}$, then
	\[
	T_{i+1}=\frac{T_i+aX_i}n,
	\]
	and
	\[
	X_i = -a_0^{-1}T_{i}\pmod n,
	\]
	for $i=0,1,\cdots, k-1$.
\end{lem}
\begin{proof} From the definition of $T_{i+1}$, we see that
	\begin{eqnarray*}
		T_{i+1}&= &\frac{ax^{(i)}-1}{n^{i+1}} = \frac{ax^{(i-1)}-1 + aX_in^i}{n^{i+1}} \\
		&=&\frac{\frac{ax^{(i-1)}-1}{n^{i}} + aX_i}{n}=\frac{T_i+aX_i}n.
	\end{eqnarray*}
	
	Since $T_i+aX_i=n T_{i+1}$, and $a^{-1}\pmod n = a_0^{-1}\pmod n $, we obtain
	\[
	X_i = - a^{-1}T_i\pmod n =  - a_0^{-1}T_i\pmod n
	\]
	
\end{proof}

From lemma \ref{lem:schoolbook}, an algorithm for computing $a^{-1}\pmod {n^k}$ can be formulated as follows:
\begin{algorithm}[H]
	\caption{Inversion Modulo a Power} 	\label{alg:XTY}
	\begin{algorithmic}[1]
		\Require{Integers $n>1, k>0$, an integer $a<n^k$ with  $ \gcd(n,a)=1$}
		\Ensure{$a^{-1} \pmod{n^k}$}
		%
		\Function{ModInverse}{$a, n^k$}
		\Let{$c$}{$a^{-1}\pmod n$}
		\Let{$T_0$}{$-1$}
		\Let{$X_0$}{$c$}
		\For{$i = 1$ to $k-1$}
		\Let{$T_{i}$}{$\frac{T_{i-1}+X_{i-1}a}n$}
		\Let{$X_{i}$}{$-cT_{i} \pmod n $}   
		\EndFor
		\State \Return{$x=(X_{k-1}\cdots X_1X_0)_n$}
		\EndFunction
	\end{algorithmic}
\end{algorithm}

When $n=2$, the algorithm is simplified to algorithm \ref{alg:XTY-2}:

\begin{algorithm}[H]
	\caption{Inversion Modulo a Power of $2$
		}\label{alg:XTY-2}
	\begin{algorithmic}[1]
		\Require{An integer $k>0$ and an odd integer $a<2^k$}
		\Ensure{$a^{-1} \pmod{2^k}$}
		\Function{ModInverse2}{$a, 2^k$}
		\Let{$T_0$}{$0$}
		\Let{$X_0$}{$1$}
		\For{$i = 1$ to $k-1$}
		\Let{$T_{i}$}{$\frac{T_{i-1}+X_{i-1}a}2$}
		\Let{$X_{i}$}{$T_{i} \pmod 2 $}
		\EndFor
		\State \Return{$x=(X_{k-1}\cdots X_1X_0)_2$}
		\EndFunction
	\end{algorithmic}
\end{algorithm}

We should remark that from the derivation of Algorithm \ref{alg:XTY}, it is easy to verify that properties 1 and 2 from the previous section also hold for the general case of modulus $n^k$.

Our algorithm (Algorithm \ref{alg:XTY}) offers great flexibility, allowing $n$ to be any integer greater than $1$. It performs particularly well when $n$ is a power of two, such as $2^{64}$ or $2^{128}$. A detailed performance analysis for this case is provided at the end of this section. Here, we demonstrate the algorithm with a concrete example for $n=10$ by computing $65537^{-1} \pmod{10^6}$. In this case, $c$, the modular inverse of $7$ modulo $10$, is $3$ (i.e., $c = 7^{-1}\pmod{10} = 3$). The step-by-step execution of Algorithm \ref{alg:XTY} for this example is presented in the following table:
\begin{table}[h!]\label{tb:base10}
	\caption{algorithm for Computing $65537^{-1} \pmod{10^6}$}
	\centering
	\begin{tabular}{lll}
		$i$	& $T_{i}=\frac{T_{i-1}+a  X_{i-1}}n $ & $X_i=-cT_i \pmod n$ \\
		\hline
		$0$&$T_0=-1$& $X_0=3$\\
		$1$&$T_1= 19661$& $X_1=7$\\
		$2$&$T_2= 47842$& $X_2=4$\\
		$3$&$T_3= 30999$& $X_3=3$\\
		$4$&$T_4= 22761$& $X_4=7$\\
		$5$&$T_5= 41852$& $X_5=4$\\
		\hline
	\end{tabular}
\end{table}	

By collecting all the digits from this computation, we get 
\[
65537^{-1} \pmod{10^6} = 473473.
\]

In the following, we have some further discussions on algorithm \ref{alg:XTY} for the case that $n=2^{w}$.

The generality of our algorithm  implies more optimization possibilities. 
As particular examples, we consider the cases of $n=2^{64}$ and  $n=2^{128}$. 
We show that by utilizing the build-in arithmetic of a computer whose native bit width of a CPU's arithmetic instructions is $64$, a significant speed up is seen. The following table contains comparisons of our algorithm \ref{alg:XTY} for the setups of $n=2^{64}$ and  $n=2^{128}$ with (1) Ko\c{c} algorithm (algorithm \ref{alg:Koc-2}), and (2)
Hurchalla algorithm in \cite{Hur}, for (the binary length) $k=128, 256, 512, 1024, 2048, 3072$ and $4096$ (in our algorithm, these correspond to $k=2, 4, 8, 16, 32, 48$ and $64$ when  $n=2^{64}$, and $k=1, 2, 4, 8, 16, 24$ and $32$ when   $n=2^{128}$).

The experiments are performed on a system running Ubuntu 24.04.2, equipped with an AMD Ryzen 7 8845HS processor. Our implementation, as well as those of Hurchalla's algorithm \cite{Hur} and Ko\c{c}'s algorithm \ref{alg:Koc-2}, were compiled using gcc 14.2.0 with the optimization flags {\tt -Ofast -march=native -flto}. Table \ref{tb:compare} presents a performance comparison, where the values in the second, third, fourth and fifth columns denote the average runtime in nanoseconds ({\sl ns}).

\begin{table*}[t!]
	\caption{
		Comparison of Modular Inversion Algorithms for Computing $a^{-1}\pmod{2^k}$}
	\centering
	\begin{tabular}{|p{1.8cm}|p{1.9cm}|p{1.9cm}|p{1.9cm}|p{1.9cm}|}
		\hline 
		Modulus Length (bits) $k$  & Our Algorithm~\ref{alg:XTY} with $n=2^{64}$ & Our  Algorithm~ \ref{alg:XTY}  with $n=2^{128}$ & Hurchalla's Algorithm in \cite{Hur} &   Algorithm~\ref{alg:Koc-2} of Ko\c{c}\\
		\hline
		$128$ & $\mathbf{2.81}$ & - &$7.11$ & $177.89$ \\
		\hline
		$256$ & $\mathbf{22.27}$ & $\mathbf{17.71}$ & $164.48$ & $5546.50$ \\
		\hline
		$512$ & $\mathbf{68.58}$ & $\mathbf{54.98}$ &$442.70$ & $14469.55$ \\
		\hline
		$1024$ & $\mathbf{201.85}$ & $\mathbf{183.77}$ &$1538.34$ & $39118.61$ \\
		\hline
		$2048$ & $\mathbf{1302.52}$ & $\mathbf{864.64}$ &$6228.26$ & $97033.69$ \\
		\hline
		$3072$ & $\mathbf{4119.65}$ & $\mathbf{2672.20}$ &$14851.10$ & $225846.22$ \\
		\hline
		$4096$ & $\mathbf{7253.46}$ & $\mathbf{4989.27}$ &$26724.95$ & $389627.86$ \\
		\hline 
	\end{tabular}
	\label{tb:compare}
\end{table*}
In our implementation of Algorithm \ref{alg:XTY}, Hurchalla's algorithm is used to compute the initial value $c \gets a^{-1} \pmod n$. All basic arithmetic operations are performed using 64-bit native machine arithmetic, which contributes significantly to the efficiency gains observed.

We would like also to  mention that our experiments indicate that 
for larger modulus (which is a power of $2$), one can use suitable large radix $n$ to save more time.

\section{Discussion on Methods via Hensel Lifting}
In this section, we first describe and analyze Dumas' algorithm \cite{Dum} and its improvement, Hurchalla's algorithm \cite{Hur}. These algorithms compute the modular inverse of an integer, where the modulus is of the form $p^{2^s}$ for a prime $p$. Dumas' algorithm employs recursion derived from Hensel lifting, while Hurchalla's algorithm uses an equivalent recurrence but allows for a flexible initialization. In the second part of this section, we generalize Hurchalla's algorithm to compute
\[
a^{-1} \pmod{n^{2^s}}
\]
for any integers $a, n>1$  with $\gcd(a, n)=1$. We show that Hurchalla's recurrence holds in this general setting, and its derivation follows from straightforward algebraic manipulation. Furthermore, this section introduces another initialization formula for Hurchalla's algorithm.

Let $p$ be a prime number. Dumas' algorithm 
is based on using Hensel's lifting to get a solution of
\[
F_a(x) \equiv 0 \pmod {p^{2^s}},
\]
where $F_a(x) = \frac{1}{ax}-1$ and $p\not| a$. As the (formal) derivative 
\[
F_a'(x) =\frac{-1}{ax^2} \not\equiv 0 \pmod {p^{2^s}},
\]
Hensel's lemma is applicable (which is viewed as the $p$-adic analogue of the classical
method of Newton for approximating real solution of a differentiable function).
Instead of lifting the solution linearly (e.g., $p \rightarrow p^2 \rightarrow p^3 \rightarrow p^4 \cdots$), \cite{Dum} employs a lifting strategy that jumps directly to the next power of two. Starting from a root $r$ satisfying $F_a(r) \equiv 0 \pmod{p^{\ell}}$, i.e., $F_a(r) = p^{\ell} b$ for some integer $b$, the improved root is given by:
\[
F_a\bigg(r-\frac{b}{F_a'(r) }p^{\ell}\bigg) \equiv 0 \pmod {p^{2\ell}}.
\]
This process yields a Hensel lifting sequence $p \rightarrow p^2 \rightarrow p^4 \rightarrow p^8 \cdots$, doubling the exponent of the modulus at each step.

The corresponding  Newton-Raphson iteration $U_{n+1}=U_n-\frac{F_a(U_n)}{F_a'(U_n)}$ is simplified
to
\begin{equation}\label{eq:Dum-rec}
	U_{m+1} = U_m(2-aU_m)
\end{equation}
with 
\[
U_{m}= a^{-1} \pmod {p^{2^m}}
\]
for $m=1,2,\cdots, s$.
The number of steps required for the Newton-Raphson iteration is $s$, which is the logarithm (base 2) of the exponent of $p$. 
Note that the $m$-th iteration requires a multiplication ($aU_m$) and a squaring operation ($U_m^2$) on integers of size approximately $p^m$.
This impacts efficiency as $m$ grows large. Indeed, as mentioned in \cite{Dum}, this algorithm is efficient for small exponent values but becomes slower for larger exponents, e.g., those exceeding $700$ bits.

In 2022, Hurchalla proposed a modification of the method by Dumas with slightly increased generality and efficiency for the case of  $p=2$ \cite{Hur}. Consider the computation of integer multiplicative inverse  $x$
modulo $2^w$ of an odd integer $a$, namely
\[
ax\equiv 1 \pmod {2^w}.
\]
Here $w=2^s$ is typically set to the bit width of arithmetic operations on a computer. The method first chooses a small value $i_0$ such that $\frac{w}{i_0}$ is a power of $2$ and computes an initial approximation  $x_0$ satisfying
\[
ax_0 \equiv 1 \pmod {2^{i_0}}.
\]
Let $y$ be any integer satisfying $y\equiv 1-ax_0 \pmod {2^w}$, and 
define the recurrence
\begin{equation}\label{eq:Hur-rec}
	x_{m+1} \equiv x_m(1+y^{2^m}) \pmod {2^w}.
\end{equation}
Then, $x_{r}$ is the integer multiplicative inverse of $a \pmod {2^w}$, where $r=\log_2 \frac{w}{i_0}$.

It is noted in \cite{Hur} that $1-y^{2^m}=ax_m \pmod {2^w}$; consequently, 
\[
1+y^{2^m}=2-ax_m \pmod {2^w}.
\]
Therefore,
the recurrences (\ref{eq:Dum-rec}) and (\ref{eq:Hur-rec}) are equivalent. 

Explicit formulas for $x_0$
for the cases $i_0=1,2,4$ are given in \cite{Hur}. For the nontrivial case of $i_0=4$, the two formulas proposed in \cite{Hur} are
\[\begin{array}{l}
	(a) \	x_0 = \mbox{XOR} (3a \pmod{2^w}, 2)\\
	(b) \	x_0 = \big(\mbox{XOR} (a, 2)-(a+a)\pmod{2^w}\big)\pmod{2^w}.
\end{array}
\]

In this paper, we propose the following explicit formula for $x_0$ when $i_0=4$:
Suppose $(\cdots a_5a_4a_3a_2a_11)_2$ is the binary representation of the odd integer
$a$. Then
\begin{equation}\label{eq:xu-1}
	x_0 = 1 + a_1 2 + a_2 2^2 + \delta_3 2^3
\end{equation}
with $\delta_3 = (a_1 + a_2 + a_3) \pmod{2}$ satisfying $a x_0 \equiv 1 \pmod{16}$.

This formula  might be useful in some applications as it only requires computing the XOR of the bits $a_1$, $a_2$, and $a_3$.
We shall prove a more general explicit formula for $a^{-1}\pmod {2^5}$. Formula (\ref{eq:xu-1}) will be a consequence of this result.
Write
\[
a=1+a_1 2+a_2 2^2+a_3 2^3+a_42^4+a_52^5+\cdots ,
\]
where $a_i\in \{0,1\}$,
we verify that the integer
\begin{equation}\label{eq:xu}
	x = 1+a_1 2+a_2 2^2+\delta_3 2^3+\delta_4 2^4
\end{equation}
with $\delta_3, \delta_4$ being given by
\begin{align*}
	\delta_3 &=(a_1+a_2+a_3)\pmod 2 \\
	\delta_4 &=(a_1+a_2+a_4+\frac{a_1+a_2+a_3+\delta_3}2)\pmod 2
\end{align*}
yields 
\[
ax\equiv 1 \pmod{2^5}.
\]
Indeed, since $\delta_3=(a_1+a_2+a_3)\pmod 2$, $a_1+a_2+a_3+\delta_3$ is an even number, and
$a_1(a_2+a_3+\delta_3)\equiv a_1 \pmod 2$, we have
\[
\delta_4\equiv a_2+a_4+a_1(a_2+a_3+\delta_3)+\frac{a_1+a_2+a_3+\delta_3}2\pmod 2.
\]
Therefore, \small{
	\begin{align*}
		ax&=1+2^3(a_1+a_2+a_3+\delta_3)\\
		& +2^4(a_2+a_4+a_1(a_2+a_3+\delta_3)+\delta_4)+2^5(a_5+\cdots )\\
		&\equiv 1+2^4(\frac{a_1+a_2+a_3+\delta_3}2+a_1+a_2+a_4+\delta_4) \pmod {2^5}\\
		&\equiv 1 \pmod {2^5}.
\end{align*}}\normalsize{}
This verifies the validity of (\ref{eq:xu}). By the property 2 in section \ref{sec:2}, since $(\delta_4 \delta_3 a_2a_1 1)_2$ is the multiplicative inverse of $a=(\cdots a_5a_4a_3a_2a_11)_2$ modulo $2^5$, we see that 
\begin{center}
	\begin{tabular}{l}
		$a^{-1} \pmod {2^5} =(\delta_4 \delta_3 a_2a_1 1)_2$\\
		$a^{-1} \pmod {2^4} =(\delta_3 a_2a_1 1)_2$\\
		$a^{-1} \pmod {2^3} =(a_2a_1 1)_2$\\
		$a^{-1} \pmod {2^2} =(a_1 1)_2$.
	\end{tabular}
\end{center}

Finally in this section we generalize Hurchalla's method to compute
\begin{equation}\label{eq:xu-gen}
	a^{-1}\pmod {n^{2^s}}
\end{equation}
for any positive integers $a, n$ with $\gcd(a,n)=1$. Our approach is built on the following simple algebraic derivation.

\begin{lem}\label{lem:xl}
	Let $a, n$ be positive integers with $\gcd(a,n)=1$. Denote by $u_0$ the multiplicative inverse of $a \pmod {n}$.
	\begin{enumerate}
		\item Let $k>0$ be an integer and set $\phi=1-au_0\pmod{n^k}$, then
		\[
		T_m =  u_0(1+\phi+\phi^2+\cdots+\phi^{m-1}) \pmod{n^m}
		\]
		is the multiplicative inverse of $a \pmod {n^m}$ for any $m\le k$.
		\item
		Let $s, h_0>0$ be  integers with $0\le h_0 < s$. Set $x_0=T_{2^{h_0}}$. Denote $y=1-ax_0\pmod{n^{2^s}}$ and define the recurrence
		\begin{equation}\label{eq:rec}
			x_{m+1}=x_m(1+y^{2^m}) \pmod{n^{2^s}}.
		\end{equation}
		Then, for $r=s-h_0$,  
		\[
		ax_{r}\equiv 1  \pmod {n^{2^s}}.
		\]
	\end{enumerate} 
\end{lem}

\begin{proof}
	\begin{enumerate}
		\item  Since $au_0\equiv 1 \pmod n$, there exists an integer $v_0$ such that
		\[
		1-au_0=v_0 n.
		\]
		Note that $au_0 \equiv 1-\phi \pmod{n^k}$ and $\phi\equiv v_0 n \pmod{n^k}$. Consequently,
		\[
		au_0 \equiv 1-\phi \pmod{n^m} \ \mbox{ and } \
		\phi\equiv v_0 n \pmod{n^m}
		\]
		since $m\le k$. Therefore
		\begin{eqnarray*}
			a T_m &\equiv&au_0(1+\phi+\cdots+\phi^{m-1}) \pmod{n^m}\\
			&\equiv&(1-\phi) (1+\phi+\cdots+\phi^{m-1}) \pmod{n^m}\\
			&\equiv& 1-\phi^m \equiv 1-(v_0n)^m \pmod{n^m}\\
			&\equiv& 1 \pmod{n^m}.
		\end{eqnarray*}
		\item Since $x_0=T_{2^{h_0}}$,  part 1) implies
		\[
		ax_0\equiv 1  \pmod {n^{2^{h_0}}},
		\]
		i.e., there exists an integer $\eta$ such that
		\[
		y \equiv 1-ax_0 \equiv  \eta n^{2^{h_0}} \pmod{n^{2^s}}.
		\]
		
		From $ x_1=x_0(1+y) \pmod{n^{2^s}}$, we obtain
		\begin{eqnarray*}
			x_2 &=&x_1(1+y^2) \pmod{n^{2^s}} \\
			&=&x_0(1+y+y^2+y^3) \pmod{n^{2^s}}\\
		\end{eqnarray*}
		Inductively, we get 
		\[
		x_i\equiv x_0(1+y+y^2+\cdots+y^{2^{i}-1}) \pmod{n^{2^s}}.
		\]
		Therefore
		\begin{eqnarray*}
			a x_r &\equiv &ax_0(1+y+\cdots+y^{2^{r}-1})\pmod{n^{2^s}}\\
			&\equiv&(1-y)(1+y+\cdots+y^{2^{r}-1})\pmod{n^{2^s}}\\
			&\equiv& 1-y^{2^{r}}\equiv 1-( n^{2^{h_0}})^{2^{r}}\pmod{n^{2^s}}\\
			&\equiv& 1 \pmod{n^{2^s}}.
		\end{eqnarray*}
	\end{enumerate} 
\end{proof}
The essence of part 2) of the lemma is a divide-and-conquer strategy. By splitting the right hand side of the relation
\[
x_{m+1}\equiv x_0(1+y+y^2+\cdots+y^{2^{m+1}-1}) \pmod{n^{2^s}}
\]
into $x_0(1+y+y^2+\cdots+y^{2^{m}-1})$ and  $y^{2^m}x_0(1+y+y^2+\cdots+y^{2^{m}-1})$, we obtain the recurrence 
(\ref{eq:rec}). Naturally, part 2) of Lemma~\ref{lem:xl}
yields the following generalization of Hurchalla's algorithm
\begin{algorithm}[H]
	\caption{Generalization of Hurchalla's Algorithm} 	\label{alg:XG}
	\begin{algorithmic}[1]
		\Require{Integers $n>1, s>0$, an integer $a<n^{2^s}$ with  $ \gcd(n,a)=1$}
		\Ensure{$a^{-1} \pmod{n^{2^s}}$}
		%
		\Function{SpecialModInverse}{$a, n^{2^s}$}
		\State {fix a small integer $0\le h_0<s$}
		\Let{$x_0$}{$a^{-1}\pmod {n^{2^{h_0}}}$}
		\Let{$y$}{$1-ax_0 \pmod{n^{2^s}}$}
		\For{$i = 1$ to $s-h_0$}
		\Let{$x_{i}$}{$x_{i-1}(1+y) \pmod{n^{2^s}}$}
		\Let{$y$}{$y^2 \pmod{n^{2^s}}$}   
		\EndFor
		\State \Return{$x_{s-h_0}$}
		\EndFunction
	\end{algorithmic}
\end{algorithm}

\section{Conclusions}
In this paper, we first develop a method for computing the modular inverse modulo a power of an arbitrary integer greater than one. In structure, our algorithm is comparable to a recent algorithm by
Ko\c{c} for prime powers, but it draws inspiration from the elementary schoolbook multiplication. 
Since the base (or radix) of our modulus is unrestricted, we are able to compute $a^{-1}\pmod {n^k}$ for any integer $n>1$. Experimental results using 
$n=2^{64}$ and $n=2^{128}$ demonstrate a significant performance advantage over existing methods. The paper also provides an alternative proof of correctness of Ko\c{c}'s algorithm and explains how to use the proof to derive both $a^{-1} \pmod {n^{s}}$ and $(n^s)^{-1}\pmod a$ for $s<k$ directly from the procedure for computing $a^{-1}\pmod {n^k}$.

The second part of the paper analyzes Hensel lifting method by Dumas and its subsequent improvement by Hurchalla for computing $a^{-1} \pmod{p^{2^s}}$. We propose a simple algebraic derivation that generalizes these 
methods to moduli of the form $n^{2^s}$ for any integer $n>1$.


\end{document}